\documentclass[conference,10pt]{IEEEtran}
\usepackage{cite}
\usepackage{array}
\usepackage{amssymb,amsmath}		
\usepackage{graphics}		
\usepackage{longtable}          
\usepackage{subfigure}
\usepackage{float}

\newcommand{\vectornorm}[1]{{\left|\left|#1\right|\right|}_{2}} 
 
\newcommand{\vectornormZero}[1]{{\left|\left|#1\right|\right|}_{0}} 
\newtheorem{theorem}{Theorem}[section]

\newenvironment{proof}[1][Proof]{\begin{trivlist}
\item[\hskip \labelsep {\bfseries #1}]}{$\blacksquare$ \end{trivlist}}

\begin{document}
\title{Restricted Isometry Property in \\ Quantized Network Coding of Sparse Messages}
\author{\IEEEauthorblockN{Mahdy Nabaee and Fabrice Labeau}
\IEEEauthorblockA{Electrical and Computer Engineering Department, McGill University, Montreal, QC
}
}

\maketitle

\begin{abstract}
In this paper, we study joint network coding and distributed source coding of inter-node dependent messages, with the perspective of compressed sensing.
Specifically, the theoretical guarantees for robust $\ell_1$-min recovery of an under-determined set of linear network coded sparse messages are investigated.
We discuss the guarantees for $\ell_1$-min decoding of quantized network coded messages, based on Restricted Isometry Property (RIP) of the resulting measurement matrix.
This is done by deriving the relation between tail probability of $\ell_2$-norms and satisfaction of RIP.
The obtained relation is then used to compare our designed measurement matrix, with i.i.d. Gaussian measurement matrix, in terms of RIP satisfaction.
Finally, we present our numerical evaluations, which shows that the proposed design of network coding coefficients results in a measurement matrix with an RIP behavior, similar to that of i.i.d. Gaussian matrix.

\end{abstract}
\begin{keywords}
Compressed sensing, linear network coding, restricted isometry property, $\ell_1$-min decoding, Gaussian ensembles.
\end{keywords}

\IEEEpeerreviewmaketitle

\section{Introduction}

Efficient data gathering in sensor networks has been the topic of many research projects where different applications have been considered.
One of the concerns in data gathering is to take care of \textit{inter-node redundancy} during the transmission.
When the knowledge of inter-node dependency is known at the encoders (\textit{i.e.} sensor nodes) and the decoder node(s), distributed source coding \cite{xiong2004distributed} and optimal packet forwarding is the best transmission method, in terms of achieved information rates \cite{NCCorr_NIFwithCOrrSo}.
However, flexibility and robustness to network changes, and
no need (et the encoders) to the knowledge of inter-node dependency has drawn attention to random linear network coding \cite{NC_RLNCtoMulticast} as an alternative transmission method \cite{Ho04networkcoding}.

Recently, the concepts of \textit{compressed sensing} \cite{1614066} have been used to perform an embedded distributed source coding in linear network coding of correlated or sparse messages
\cite{rabbat,CdataGathering,sFeiz,feizi2011power}.
Joint source, channel, and network coding is studied in \cite{sFeiz,feizi2011power}, where analogue network coding \cite{katti2007embracing} is used as a linear mapping to decrease temporal and spatial redundancy of sensor data.
In \cite{naba}, we proposed Quantized Network Coding (QNC) with $\ell_1$-min decoding , where the sparse messages can be recovered from smaller number of packets compared to the conventional linear network coding \cite{NC_RLNCtoMulticast}.

To guarantee robust $\ell_1$-min recovery of messages from an under-determined set of linear measurements, the total measurement matrix has to be appropriate (or in other words satisfy some special properties).
For instance, if it satisfies Restricted Isometry Property (RIP) of appropriate order, then $\ell_1$-min recovery is feasible \cite{candes,LinProg}.
However, the literature of compressed sensing-based network coding does not include any result discussing theoretical (or even practical) requirements for robust $\ell_1$-min recovery of linear network coded messages.

In this paper, we discuss theoretical guarantees for $\ell_1$-min decoding of quantized network coded messages, based on RIP.
Specifically, we discuss the satisfaction of RIP and its implications for the measurement matrix, resulting from the design of local network coding coefficients, proposed in \cite{naba}.

The description of data gathering scenario and formulation of our proposed quantized network coding \cite{naba} is presented in section~\ref{sec:QNC}.
This is followed by a discussion on choosing appropriate local network coding coefficients, which result in zero mean Gaussian entries for the measurement matrix, in section~\ref{sec:DesignNCodes}.
In section~\ref{sec:TailProbRIP}, we derive the relation between the tail probability of $\ell_2$-norms and satisfaction of RIP, and discuss satisfaction of RIP for our designed measurement matrices.
In section~\ref{sec:Numerical}, a numerical example is presented, which compares the measurement matrix, resulting from our QNC scenario with the case of perfect Gaussian measurement matrix.
Finally, in section~\ref{sec:Conclusions}, we discuss our concluding remarks on satisfaction of RIP in our QNC scenario.

\section{Quantized Network Coding with $\ell_1$-min Decoding in Lossless Networks}\label{sec:QNC}

In this paper, we consider a lossless sensor network, represented by a directed graph, $\mathcal{G}=(\mathcal{V},\mathcal{E})$, where $\mathcal{V}=\{1,\ldots,n\}$ is the set of its nodes.
$\mathcal{E}=\{1,\ldots,|\mathcal{E}|\}$ is also the set of edges (links), where each edge, $e \in \mathcal{E}$, maintains a lossless communication from $tail(e)$ node to $head(e)$ node, at a maximum rate of $C_e$ bits per link use.
As a result, the input content of edge $e$ at time $t$, represented by $y_e(t)$ (since the links are lossless, input and output contents of each edge are the same), is from a discrete finite alphabet of size $2^{L C_e}$.
Time index, $t$, is integer and a time unit represents the time in which blocks of length $L$ are transmitted over all edges.
The sets of incoming and outgoing edges of node $v$, are defined respectively:
\begin{eqnarray}
\textit{In}(v)&=&\{e: head(e)=v, e \in \mathcal{E}\}, \nonumber \\
\textit{Out}(v)&=&\{e: tail(e)=v, e \in \mathcal{E}\}. \nonumber
\end{eqnarray}
We assume that each node $v$ has a random information source, $X_v$, which generates (random) message, called $x_v$, where $x_v \in \mathbb{R}$.
Furthermore, we consider the case where the messages, $\underline{x}=[x_v:v \in \mathcal{V}] \in \mathbb{R}^n$, are such that there is a linear transform matrix, $\phi_{n \times n}$, for which $\underline{x}=\phi \cdot \underline{s}$, and $\underline{s}$ is $k$-sparse (has at most $k$ non-zero elements).
In (single session) data gathering, all the messages, $x_v$'s, are to be transmitted to a single gateway (or decoder) node, represented by $v_0$, where $v_0 \in \mathcal{V}$.

QNC at each node, $v \in \mathcal{V}$, was defined in \cite{naba}, as follows:
\begin{equation}\label{Eq:QNC1}
y_e(t)= \textbf{Q}_e \Big [\sum_{e' \in \textit{In}(v)} \beta_{e,e'}(t)\cdot y_e(t-1)+\alpha_{e,v}(t)\cdot x_v \Big ], 
\end{equation}
where $\textbf{Q}_e[ \centerdot]$ is the quantizer (designed based on the value of $C_e$ and the distribution of incoming contents and messages), associated with the outgoing edge $e \in \textit{Out}(v)$, and $\beta_{e,e'}(t)$ and $\alpha_{e,v}(t)$ are the corresponding network coding coefficients, picked from real numbers.
Initial rest condition is also assumed to be satisfied in our QNC scenario: $y_e(1)=0,~\forall~e \in \mathcal{E}.$
We represent the quantization error of $\textbf{Q}_e[ \centerdot]$ by $n_e(t)$, which implies:
\begin{equation}\label{Eq:QNC2}
y_e(t)= \sum_{e' \in \textit{In}(v)} \beta_{e,e'}(t)\cdot y_e(t-1)+\alpha_{e,v}(t)\cdot x_v + n_e(t). 
\end{equation}
Equivalently, we have \cite{naba}:
\begin{equation}\label{Eq:matrixForm}
\underline{y}(t)=F(t) \cdot \underline{y}(t-1)+A(t) \cdot \underline{x}+\underline{n}(t), 
\end{equation}
where $\underline{y}(t)=[y_e(t):e \in \mathcal{E}]$, $\underline{n}(t)=[n_e(t):e \in \mathcal{E}]$, and,
\begin{equation}
 F(t)_{|\mathcal{E}|\times |\mathcal{E}|}: \{F(t)\}_{e,e'}=\left\{
\begin{array}{l l}
  \beta_{e,e'}(t)  & ,~\scriptsize tail(e)=head(e') \\
  0  &  ,~\mbox{otherwise} \\ \end{array} \right. \nonumber
\end{equation}
\begin{equation}\label{Eq:defineAt}
 A(t)_{|\mathcal{E}|\times |\mathcal{V}|}: \{A(t)\}_{e,v}=\left\{
\begin{array}{l l}
  \alpha_{e,v}(t)  & ,~tail(e)=v \\
  0  &  ,~\mbox{otherwise} \\ \end{array} \right. . \nonumber
\end{equation}
By using linearity in the QNC scenario, the \textit{marginal measurements} at time $t$, represented by $\{z(t)\}_i$'s, where $\underline{z}(t)=[y_e(t):e \in \textit{In}(v_0)]$, are calculated as:
\begin{equation}\label{Eq:measForm1}
\underline{z}(t)=B(t) \cdot \underline{y}(t)= \Psi(t) \cdot \underline{x}+\underline{n}_{eff}(t).
\end{equation}
In Eq.~\ref{Eq:measForm1}, $\Psi(t)$ and $\underline{n}_{eff}(t)$ are defined as:
\begin{eqnarray}
\Psi(t)&=&B(t) \cdot \sum_{t'=2}^{t} \prod_{t''=t}^{t'+1} F(t'') \cdot A(t'), \label{Eq:DefPsi} \\
\underline{n}_{eff}(t)&=&B(t) \cdot \sum_{t'=2}^{t} \prod_{t''=t}^{t'+1} F(t'') \cdot \underline{n}(t'), \label{Eq:DefNeff}
\end{eqnarray}
and $B(t)$ is defined such that:
\begin{equation}
\{B(t)\}_{i,e}=\left\{
\begin{array}{l l}
  b_{i,e}(t)  & ,~i~\mbox{corresponds to}~e,~e \in \textit{In}(v_0) \\
  0  &  ,~\mbox{otherwise} \\ \end{array} \right. . \nonumber
\end{equation}

We store marginal measurements, at the decoder, and build up \textit{total measurements vector}, called $\underline{z}_{tot}(t)$, as follows:
\begin{equation}\label{Eq:totMeasEq}
\underline{z}_{tot}(t)=\left[ {\begin{array}{*{20}c}
	\underline{z}(2) \\	
   \vdots   \\
   \underline{z}(t)   \\
 \end{array} } \right]_{m \times 1},
\end{equation}
where 
\begin{equation}
m=(t-1) |\textit{In}(v_0)| ,
\end{equation}
and for which we have \cite{naba}:
\begin{equation}\label{Eq:measEq}
\underline{z}_{tot}(t)= \Psi_{tot}(t) \cdot \underline{x} + \underline{n}_{eff,tot}(t),
\end{equation}
where the \textit{total measurement matrix}, $\Psi_{tot}(t)$, and \textit{total effective noise vector}, $\underline{n}_{eff,tot}(t)$, are calculated as follows:
\begin{equation}
\Psi_{tot}(t)=\left[ {\begin{array}{*{20}c}
	\Psi(2) \\	
   \vdots   \\
   \Psi(t)   \\
 \end{array} } \right],~~
\underline{n}_{eff,tot}(t)=\left[ {\begin{array}{*{20}c}
	\underline{n}_{eff}(2) \\	
   \vdots   \\
	\underline{n}_{eff}(t)   \\
 \end{array} } \right].
\end{equation}

Since (\ref{Eq:measEq}) is in the form of a noisy linear measurement equation, compressed sensing decoding (\textit{i.e.} $\ell_1$-min recovery of Eq.~14 in \cite{naba}) can be applied, even if $m$ is smaller than $n$.
However, robust $\ell_1$-min decoding requires the total measurement matrix, $\Psi_{tot}(t)$, to satisfy some conditions \cite{LinProg,candes}.
Specifically, to ensure that the upper bound of Eq.~15 in \cite{naba} holds, we have to investigate the satisfaction of RIP for $\Psi_{tot}(t)$, in our QNC scenario.
In \cite{naba}, we proposed an appropriate design for network coding coefficients, which resulted in improved delay-quality performance for our QNC, compared to conventional packet forwarding.
In this paper, we analyze the satisfaction of RIP for $\Psi_{tot}(t)$, resulting from the proposed design of local network coding coefficients in \cite{naba} (also described in section~\ref{sec:DesignNCodes}).

\section{Design of Network Coding Coefficients}\label{sec:DesignNCodes}

Matrices with good norm conservation property are shown to be good choices for measurement in compressed sensing \cite{candes}.
RIP characterizes the norm conservation such that an $m \times n$ matrix, $\Theta$, is said to satisfy \textit{RIP of order $k$ with constant $\delta_k$}, if:
\begin{equation}\label{Eq:RIPexact}
1-\delta_k \leq \frac{ \vectornorm{\Theta \cdot \underline{s} }^2}{\vectornorm{\underline{s}}^2} \leq 1+ \delta_k,~\forall~\underline{s} \in \mathbb{R}^n,~\vectornormZero{\underline{s}} \leq k.
\end{equation}
Random matrices with Independently and Identically Distributed (i.i.d.) zero mean Gaussian entries are proved to satisfy RIP with an overwhelming probability \cite{simpleProof}.
Explicitly, for an $m \times n$ matrix with i.i.d. zero mean Gaussian entries of variance $\frac{1}{m}$, called $G_{m \times n}$, the probability of satisfying RIP of order $k$ and constant $\delta_k$ is exceeding
\begin{equation}\label{Eq:GoverWprob}
1-e^{-\kappa_2 m},
\end{equation}
(also referred as \textit{overwhelming}) where 
\begin{equation}\label{Eq:GnumMeas}
m > \kappa_1 k \log(\frac{n}{k}),
\end{equation}
and $\kappa_1, \kappa_2$ only depend on $\delta_k$ (theorem~5.2 in \cite{simpleProof}).
In the following, we mention a design (originally proposed in \cite{naba}) for the local network coding coefficients, $\beta_{e,e'}(t)$ and $\alpha_{e,v}(t)$, which results in zero mean Gaussian entries for $\Psi_{tot}(t)$.
Then in section~\ref{sec:TailProbRIP}, we derive an upper probability bound for satisfying RIP in our QNC scenario with the proposed coefficients.

\begin{theorem}\label{th:Gaussian1}
Consider a quantized network coding scenario, in which the network coding coefficients, $\alpha_{e,v}(t)$ and $\beta_{e,e'}(t)$, are such that:
\begin{itemize}
\item $\alpha_{e,v}(t)=0,~\forall t>2,$
\item $\alpha_{e,v}(2)$'s are independent zero mean Gaussian random variables,
\item $\beta_{e,e'}(t)$'s are deterministic.
\end{itemize}
For such a scenario, the entries of the resulting $\Psi_{tot}(t)$ are zero mean Gaussian random variables, and the entries of different columns of $\Psi_{tot}(t)$, \textit{i.e.} $\{\Psi_{tot}(t)\}_{iv}$ and $\{\Psi_{tot}(t)\}_{i'v'}$, where $v,v' \in \mathcal{V},~v \neq v',$ are independent.
\end{theorem}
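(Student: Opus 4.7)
The plan is to exploit the first hypothesis $\alpha_{e,v}(t)=0$ for $t>2$ to collapse the sum in (\ref{Eq:DefPsi}) to a single term, making $\Psi_{tot}(t)$ an affine (in fact linear) function of the Gaussian matrix $A(2)$ with deterministic coefficients. Concretely, substituting the hypotheses into (\ref{Eq:DefPsi}) gives $\Psi(t)=B(t)\cdot\prod_{t''=t}^{3}F(t'')\cdot A(2)$ for every $t\ge 2$ (with the convention that the empty product is the identity when $t=2$), so that stacking yields
\begin{equation}
\Psi_{tot}(t)=M_{tot}(t)\cdot A(2), \nonumber
\end{equation}
where $M_{tot}(t)$ is the vertical concatenation of the deterministic matrices $B(t')\cdot\prod_{t''=t'}^{3}F(t'')$ for $t'=2,\ldots,t$.

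Next, I would unpack the structure of $A(2)$ using (\ref{Eq:defineAt}). The definition shows that the $v$-th column of $A(2)$ is supported only on rows indexed by edges $e$ with $tail(e)=v$, where it carries the independent zero mean Gaussian entries $\alpha_{e,v}(2)$; all other rows of that column vanish. In particular, any given row of $A(2)$ has exactly one nonzero entry, and distinct columns $v\ne v'$ are supported on disjoint sets of edges and therefore depend on disjoint subsets of the independent Gaussians $\{\alpha_{e,v}(2)\}$.

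From here the conclusion is immediate by the usual closure properties of Gaussians. Every entry $\{\Psi_{tot}(t)\}_{i v}$ is a fixed linear combination of the Gaussians in the $v$-th column of $A(2)$, hence a zero mean Gaussian. Moreover, the collection $\{\{\Psi_{tot}(t)\}_{i v}\}_i$ is a linear function of $\{\alpha_{e,v}(2):tail(e)=v\}$, while $\{\{\Psi_{tot}(t)\}_{i' v'}\}_{i'}$ is a linear function of the disjoint set $\{\alpha_{e,v'}(2):tail(e)=v'\}$; since the underlying Gaussians are mutually independent, the two column random vectors are independent.

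I do not anticipate a serious technical obstacle; the only subtle point is verifying that the collapsing of the sum in (\ref{Eq:DefPsi}) is valid, which amounts to checking the initial rest condition $\underline{y}(1)=0$ (so that no contribution from $t'=1$ survives) and the convention for the empty product of $F(t'')$ matrices when $t=2$. Once these bookkeeping details are aligned with the notation of (\ref{Eq:DefPsi}), the rest is a direct appeal to the fact that linear images of independent Gaussians remain jointly Gaussian, and independence follows from disjointness of the driving Gaussian variables across columns.
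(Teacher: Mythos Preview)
Your proposal is correct and follows essentially the same approach as the paper: collapse the sum in (\ref{Eq:DefPsi}) to the single term $\Psi(t)=B(t)\,F(t)\cdots F(3)\,A(2)$ using $\alpha_{e,v}(t)=0$ for $t>2$, then observe that entries of $\Psi_{tot}(t)$ are deterministic linear combinations of the independent zero-mean Gaussians in $A(2)$, with different columns depending on disjoint subsets of those Gaussians. Your write-up is in fact somewhat more explicit than the paper's in spelling out why the columns of $A(2)$ involve disjoint sets of variables (via the support structure $tail(e)=v$) and in flagging the empty-product convention at $t=2$, but the argument is the same.
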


\begin{proof}
By choosing $\alpha_{e,v}(t)=0,~\forall~t>2,$ we have:
\begin{equation}
\Psi(t)=B(t) \cdot F(t) \cdots F(3) \cdot A(2), \label{Eq:Psit}
\end{equation}
which implies that each entry of $\Psi(t)$'s and also $\Psi_{tot}(t)$ is a linear combination of entries of $A(2)$.
Moreover, since entries of $A(2)$ are zero mean Gaussian random variables, then the entries of $\Psi(t)$'s and also $\Psi_{tot}(t)$ are zero mean Gaussian random variables.
Since entries in different columns of $\Psi_{tot}(t)$, are linear combinations of two independent sets of random variables, \textit{i.e.} entries of $A(2)$, then they are also independent.
However, such conclusion can not be made for entries of the same column of $\Psi_{tot}(t)$.
\end{proof}


\section{RIP Analysis and Tail Probability of $\ell_2$-norms}\label{sec:TailProbRIP}

Satisfaction of RIP for random matrices is usually characterized by its probability (or its lower probability bounds) \cite{simpleProof}.
Moreover, to approach the probabilistic satisfaction of RIP, we first need to derive an expression for the tail probability of $\ell_2$ norms \cite{simpleProof,CSbook}. 
Specifically, a well behaved $\Psi_{tot}(t)$ (\textit{i.e.} $\Psi_{tot}(t)$ with high RIP probability) should be such that
\begin{equation}\label{Eq:defineTailProb1}
\textbf{P}(\Big |\vectornorm{{\Psi_{tot}(t) \cdot \underline{x}}}^2-1 \Big| \geq \epsilon),
\end{equation}
is very small, for all $\underline{x}$, with $\vectornorm{\underline{x}}=1$. 
In the following, we calculate this tail probability for our QNC scenario with the proposed network coding coefficients, and then present a theorem which explicitly describes the relation between the satisfaction of RIP and the tail probability of Eq.~\ref{Eq:defineTailProb1}.
In the rest of this section, we assume that the conditions of Theorem~\ref{th:Gaussian1} hold.

Consider 
\begin{equation}
\underline{z}'=\Psi_{tot}(t) \cdot \underline{x},
\end{equation}
where $\underline{x} \in \mathbb{R}^n$, and $\vectornorm{\underline{x}}=1$.
Since the conditions of Theorem~\ref{th:Gaussian1} are satisfied, Eq.~\ref{Eq:Psit} holds, and therefore:
\begin{equation}
\Psi_{tot}(t) = \left[ {\begin{array}{*{20}c}
	\Psi(2) \\	
   \vdots   \\
   \Psi(t)   \\
 \end{array} } \right]
 = \Omega(t) \cdot A(2),
 \end{equation}
where
\begin{equation}
 \Omega(t)=\left[ {\begin{array}{*{20}c}
	B(2) \\	
	B(3) F(3) \\
   \vdots   \\
   B(t) F(t) \cdots F(3)  \\
 \end{array} } \right].
\end{equation}
This implies:
\begin{equation}
\underline{z}'=\Omega(t) ~ A(2) \cdot \underline{x},
\end{equation}
or equivalently:
\begin{eqnarray}
z'_i&=&\sum_{v=1}^n \{\Psi_{tot}(t)\}_{iv}~ x_v \nonumber \\
&=& \sum_{v=1}^n \sum_{e=1}^{|\mathcal{E}|} \{\Omega(t)\}_{ie} ~\{A(2)\}_{ev}~ x_v. 
\end{eqnarray}
By expanding $z'^2_i$, and using the fact that $\{A(2)\}_{ev}$ is non-zero only when $tail(e)=v$, we have:
\begin{eqnarray}
\vectornorm{\underline{z}'}^2 &=& \sum_{i=1}^{m} z'^2_i \label{Eq:equation1}  \\
& = &\sum_{e=1}^{|\mathcal{E}|} \sum_{e'=1}^{|\mathcal{E}|} \gamma_{e,e'}(\underline{x})  \{A(2)\}_{e,tail(e)}  \{A(2)\}_{e',tail(e')}, \nonumber
\end{eqnarray}
where:
\begin{equation}
\gamma_{e,e'}(\underline{x}) = \sum_{i=1}^{m} \{\Omega(t)\}_{ie}~\{\Omega(t)\}_{ie'} \cdot x_{tail(e)}~x_{tail(e')}
\end{equation}
Using eigen-decomposition, (\ref{Eq:equation1}) simplifies to:
\begin{eqnarray}
\vectornorm{\underline{z}'}^2= \sum_{e=1}^{|\mathcal{E}|} \lambda_{e}(\underline{x}) \cdot \chi^2_{e},
\end{eqnarray}
where $\lambda_e(\underline{x})$'s are eigen-values of the symmetric matrix
\begin{equation}
\Gamma(\underline{x})=[\gamma_{e,e'}(\underline{x})]_{|\mathcal{E}| \times |\mathcal{E}|},
\end{equation}
and $\chi^2_e$'s are independent Chi-Square random variables of first order.

Moreover, for the characteristic function of $\vectornorm{\underline{z}'}^2$, we have:
\begin{eqnarray}
\textbf{E}[e^{j\omega \vectornorm{\underline{z}'}^2 }] &=& 
\textbf{E}[e^{j\omega \sum_{e=1}^{|\mathcal{E}|} \lambda_{e}(\underline{x}) \chi^2_{e} }] \\
&=& \prod_{e=1}^{|\mathcal{E}|} \textbf{E}[e^{j\omega \lambda_{e}(\underline{x}) \chi^2_{e} }] \label{Eq:prove111} \\
&=& \prod_{e=1}^{|\mathcal{E}|} \frac{1}{\sqrt{1-j2\omega \lambda_e(\underline{x}) }},
\end{eqnarray}
where (\ref{Eq:prove111}) is derived from independence of $\chi^2_e$'s.
By using the inverse formula of characteristic function, Eqs.~\ref{Eq:provepart1}-\ref{Eq:tailProb} can be obtained, where $\textbf{p}_{\vectornorm{\underline{z}'}^2} (\centerdot)$ is the probability density function of $\vectornorm{\underline{z}'}^2$, and (\ref{Eq:prove1}) is resulted from the integral property of the Fourier transform.
\begin{figure*}[t]
\begin{eqnarray}
\textbf{P}\Big(\Big | \vectornorm{\underline{z}'}^2 -1 \Big |> \epsilon \Big) 
&=& 1+\int_{-\infty}^{1-\epsilon} \textbf{p}_{\vectornorm{\underline{z}'}^2} (\nu) d\nu 
- \int_{-\infty}^{1+\epsilon} \textbf{p}_{\vectornorm{\underline{z}'}^2} (\nu) d\nu \label{Eq:provepart1} \\
&=& 1+\frac{1}{2\pi} \int_{-\infty}^{+\infty} \frac{\textbf{E}[e^{j\omega \vectornorm{\underline{z}'}^2 }]}{-j\omega}e^{-j\omega(1-\epsilon)} d\omega 
- \frac{1}{2\pi} \int_{-\infty}^{+\infty} \frac{\textbf{E}[e^{j\omega \vectornorm{\underline{z}'}^2 }]}{-j\omega}e^{-j\omega(1+\epsilon)} d\omega \label{Eq:prove1} \\
&=&1-\frac{1}{\pi} \int_{-\infty}^{+\infty} \frac{ e^{-j\omega} \sin(\epsilon \omega)}{\omega \prod_{e=1}^{|\mathcal{E}|} \sqrt{1-j2\omega \lambda_e(\underline{x})} } d\omega, \label{Eq:tailProb}
\end{eqnarray}
\end{figure*}
The right hand side of (\ref{Eq:tailProb}) is the expression for the tail probability of $\ell_2$-norms, for a specific $\underline{x}$, resulting from our proposed network coding coefficients.

In the following, we present Theorem~\ref{theorem:RIP1}, which clarifies the relation between the tail probability of (\ref{Eq:defineTailProb1}) and the probability of satisfying RIP \textit{for a general case}.

\begin{theorem}\label{theorem:RIP1}
Consider $\Phi$ for which we have:
\begin{eqnarray}\label{Eq:RIP1condition}
\textbf{p}_{tail}({\Phi},\epsilon)&=&\max_{\underline{x}} \textbf{P}\Big(\Big |\vectornorm{{\Phi  \cdot \underline{x}}}^2-1 \Big| \geq \epsilon \Big), \nonumber \\
&& s.t.~\vectornorm{\underline{x}}=1
\end{eqnarray}
In such case, for every orthonormal $\phi$, $\Theta=\Phi \cdot \phi$ satisfies RIP of order $k$ and constant $\delta_{k}$, with a probability exceeding,
\begin{equation}\label{Eq:lowerRIPbound1}
\textbf{p}_{RIP}\Big(\Phi,k,\delta_k \Big)=1- \left(
\begin{array}{c}
n\\
k
\end{array}
\right) (\frac{42}{\delta_k})^k ~ \textbf{p}_{tail}({\Phi},\epsilon=\frac{\delta_k}{\sqrt{2}}).
\end{equation}
\end{theorem}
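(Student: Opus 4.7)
The plan is to follow the classical net-and-union-bound strategy pioneered in \cite{simpleProof}, adapted so that we pay the tail-probability cost only at the inflated radius $\epsilon=\delta_k/\sqrt{2}$. Since $\phi$ is orthonormal, $\|\phi \underline{s}\| = \|\underline{s}\|$, so proving the RIP inequality for $\Theta=\Phi\cdot\phi$ on $k$-sparse $\underline{s}$ is equivalent to proving $\bigl|\|\Phi \underline{x}\|^2 - \|\underline{x}\|^2\bigr|\le\delta_k\|\underline{x}\|^2$ for every $\underline{x}$ lying in some $k$-dimensional subspace $X_T:=\phi\cdot\mathrm{span}\{e_i:i\in T\}$, where $T\subset\{1,\ldots,n\}$, $|T|=k$. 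There are $\binom{n}{k}$ such subspaces, one for each support pattern.

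First, I would fix one such subspace $X_T$ and build a finite $\eta$-net $Q_T$ of its unit sphere $S_T:=\{\underline{x}\in X_T:\|\underline{x}\|=1\}$. A standard volume-comparison argument (the one used in \cite{simpleProof}) yields $|Q_T|\le (1+2/\eta)^k$. At every point $\underline{q}\in Q_T$ the hypothesis (\ref{Eq:RIP1condition}) gives
\begin{equation}
\textbf{P}\Big(\bigl|\|\Phi\underline{q}\|^2-1\bigr|\ge\epsilon\Big)\le \textbf{p}_{tail}(\Phi,\epsilon),
\end{equation}
so a union bound over $Q_T$ and then over the $\binom{n}{k}$ choices of $T$ shows that, outside an event of probability at most $\binom{n}{k}(1+2/\eta)^k\,\textbf{p}_{tail}(\Phi,\epsilon)$, the concentration bound holds \emph{simultaneously} at every net point.

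Next I would pass from the net to the full sphere. Let $A:=\sup_{\underline{x}\in S_T}\|\Phi\underline{x}\|^2$. For any $\underline{x}\in S_T$, pick $\underline{q}\in Q_T$ with $\|\underline{x}-\underline{q}\|\le\eta$, expand $\|\Phi\underline{x}\|^2$ in terms of $\|\Phi\underline{q}\|^2$ and $\|\Phi(\underline{x}-\underline{q})\|^2$, and use the Cauchy--Schwarz step $\|\Phi(\underline{x}-\underline{q})\|\le\eta\sqrt{A}$. This produces a self-referential inequality of the form $A\le 1+\epsilon+2\eta\sqrt{A}+\eta^2 A$, which rearranges into an upper bound on $A$ of the form $1+\delta_k$. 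A symmetric argument gives the lower bound $1-\delta_k$. Choosing $\eta$ and $\epsilon$ so that the resulting $\delta_k$ matches $\sqrt{2}\,\epsilon$ (and so the net-size factor $(1+2/\eta)^k$ collapses into $(42/\delta_k)^k$) is exactly what the theorem prescribes; substituting $\epsilon=\delta_k/\sqrt{2}$ into the failure probability then yields (\ref{Eq:lowerRIPbound1}).

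The routine part is the net construction and the union bound. The main obstacle, and the only place where care is needed, is the constant bookkeeping in the net-to-sphere extension: one has to pick $\eta$ so that the quadratic inequality for $A$ closes with exactly the radius $\delta_k$ under the tail bound taken at $\delta_k/\sqrt{2}$, and so that the resulting net-cardinality factor $(1+2/\eta)^k$ is no larger than $(42/\delta_k)^k$. Everything else---the reduction via orthonormality of $\phi$, the two union bounds, and the symmetric treatment of upper and lower RIP inequalities---is structural and essentially forced by the hypothesis.
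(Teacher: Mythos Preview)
Your proposal is correct and follows essentially the same covering-net-plus-union-bound route as the paper, which cites Lemma~7.5 and Theorem~7.3 of \cite{CSbook} for the net construction (radius $\delta_k/14$, cardinality $(42/\delta_k)^k$) and the net-to-sphere extension, then applies the hypothesis (\ref{Eq:RIP1condition}) at $\epsilon=\delta_k/\sqrt{2}$ together with the two union bounds. The only cosmetic difference is that you phrase the net-size bound as $(1+2/\eta)^k$ and leave the choice of $\eta$ implicit, whereas the paper fixes $\eta=\delta_k/14$ up front; your remark that the constant bookkeeping in the self-referential inequality is the only delicate point matches exactly where the paper defers to \cite{CSbook}.
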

\begin{proof}\footnote{Most of the proof is similar to the proof of Theorem~7.3 in \cite{CSbook}.}
To prove that RIP holds, we should show that inequality of (\ref{Eq:RIPexact}) is satisfied, for all $k$-sparse vectors, $\underline{s}$.
We only need to show it is satisfied, for vectors, with $\vectornorm{\underline{s}}=1$, since  $\vectornorm{\Phi \phi \cdot \underline{s}}$ is proportional with $\vectornorm{\underline{s}}$. 
Now, fix a set $T \subset \{1,2,\ldots,n\}$, with $|T|=k$, and let $\Gamma_T$ be the subspace of $k$-dimensional vectors, $\underline{s}_T$, spanned by columns of $\Phi$, with indexes in $T$.
According to lemma 7.5 in \cite{CSbook}, we can choose a finite set of vectors, $\underline{w}_T \in \mathcal{W}_T$, where $\mathcal{W}_T \subset \Gamma_T$ and $\vectornorm{\underline{w}} \leq 1$, such that for all $\underline{s}_T \in \Gamma_T$, with $\vectornorm{\underline{s}_T} \leq 1$, we have:
\begin{equation}
\vectornorm{\underline{s}_T-\underline{w}_T} \leq \frac{\delta_k}{14},
\end{equation}
conditioned on:
\begin{equation}
|\Gamma_T| \leq (\frac{42}{\delta_k})^k.
\end{equation}
There are $\left(
\begin{array}{c}
n\\
k
\end{array}
\right)$ different $T$'s, for which we repeat the above procedure and obtain: $$\mathcal{W}= \bigcup_T \mathcal{W}_T.$$
By using the union bound and the fact that for every $\underline{x}=\phi \cdot \underline{w}$, where $\underline{w} \in \mathcal{W}$, Eq.~\ref{Eq:RIP1condition} implies:
\begin{equation}
\textbf{P}\Big(\Big |\vectornorm{{\Phi  \cdot \underline{x}}}^2-1 \Big| \geq \epsilon \Big) \leq 
\textbf{p}_{tail}({\Phi},\epsilon).
\end{equation}
Therefore, for every $\underline{w} \in \mathcal{W}$, the inequality 
\begin{equation}\label{Eq:inequality110}
(1-\frac{\delta_k}{\sqrt{2}}) \vectornorm{\underline{w}}^2 \leq \vectornorm{\Phi \phi \cdot \underline{w}}^2 \leq (1+\frac{\delta_k}{\sqrt{2}}) \vectornorm{\underline{w}}^2,
\end{equation}
holds with a probability exceeding
\begin{equation}
1-\left(
\begin{array}{c}
n\\
k
\end{array}
\right) (\frac{42}{\delta_k})^k ~\textbf{p}_{tail}({\Phi},\frac{\delta_k}{\sqrt{2}}).
\end{equation}
The rest of the proof uses the same reasoning procedure, as in the proof of Theorem~7.3 in \cite{CSbook}.
\end{proof}

It can be concluded from Theorem~\ref{theorem:RIP1} that in order to have a good RIP satisfaction (\textit{i.e.} high upper probability bound for satisfaction of RIP), a small worst case tail probability, $\textbf{p}_{tail}(\centerdot,\frac{\delta_k}{\sqrt{2}})$, is required.
In section~\ref{sec:Numerical}, we compare $\textbf{p}_{tail}(\centerdot,\frac{\delta_k}{\sqrt{2}})$'s, corresponding to our designed $\Psi_{tot}(t)$ and i.i.d. Gaussian matrix, to numerically evaluate their RIP behaviour.

\begin{figure*}[!t]
\centering
\subfigure[$1100$ edges]{
\resizebox{!}{.33\textheight}{
\includegraphics{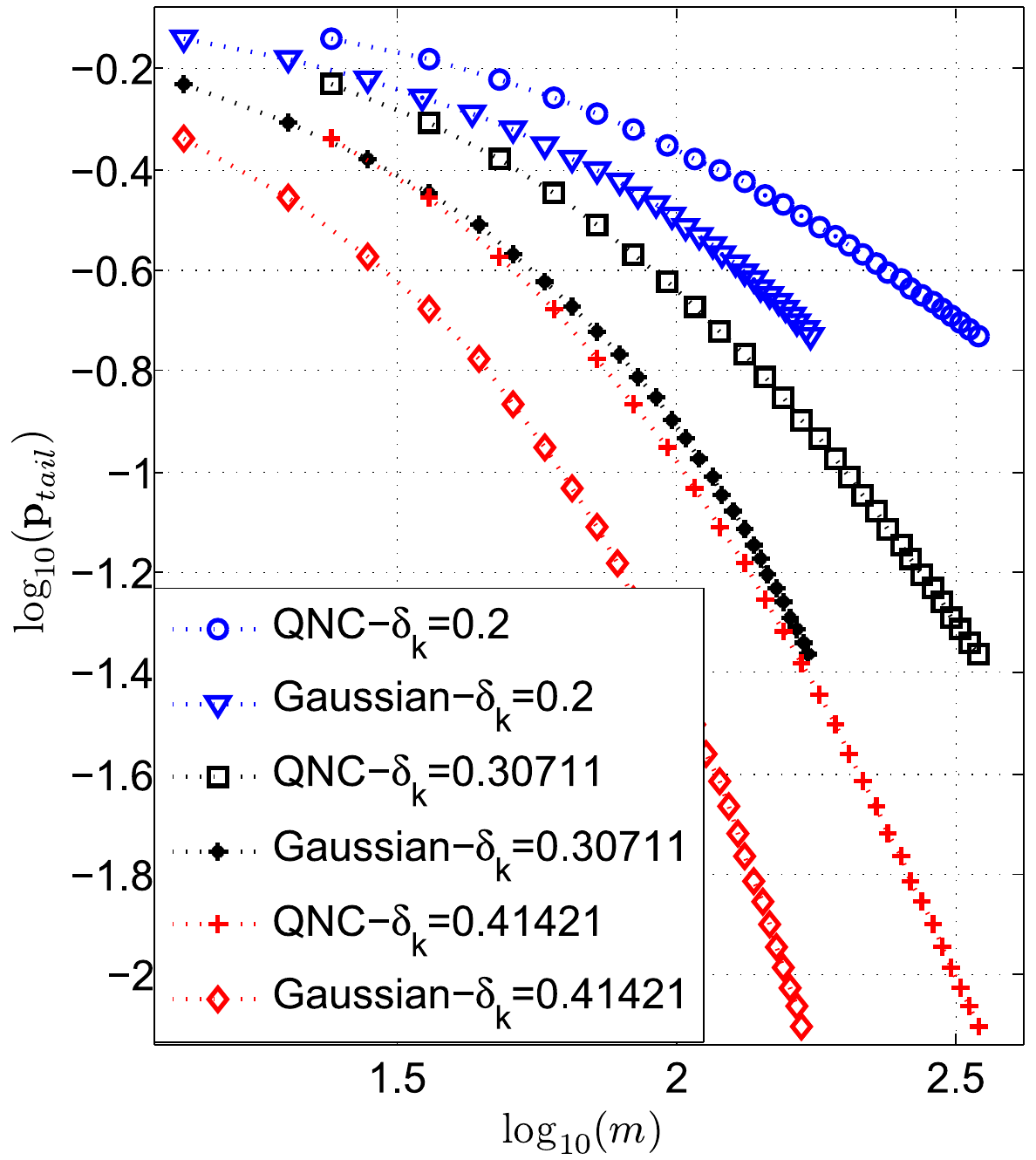}}
\label{fig:subfig1100}
} 
\subfigure[$1400$ edges]{
\resizebox{!}{.33\textheight}{
\includegraphics{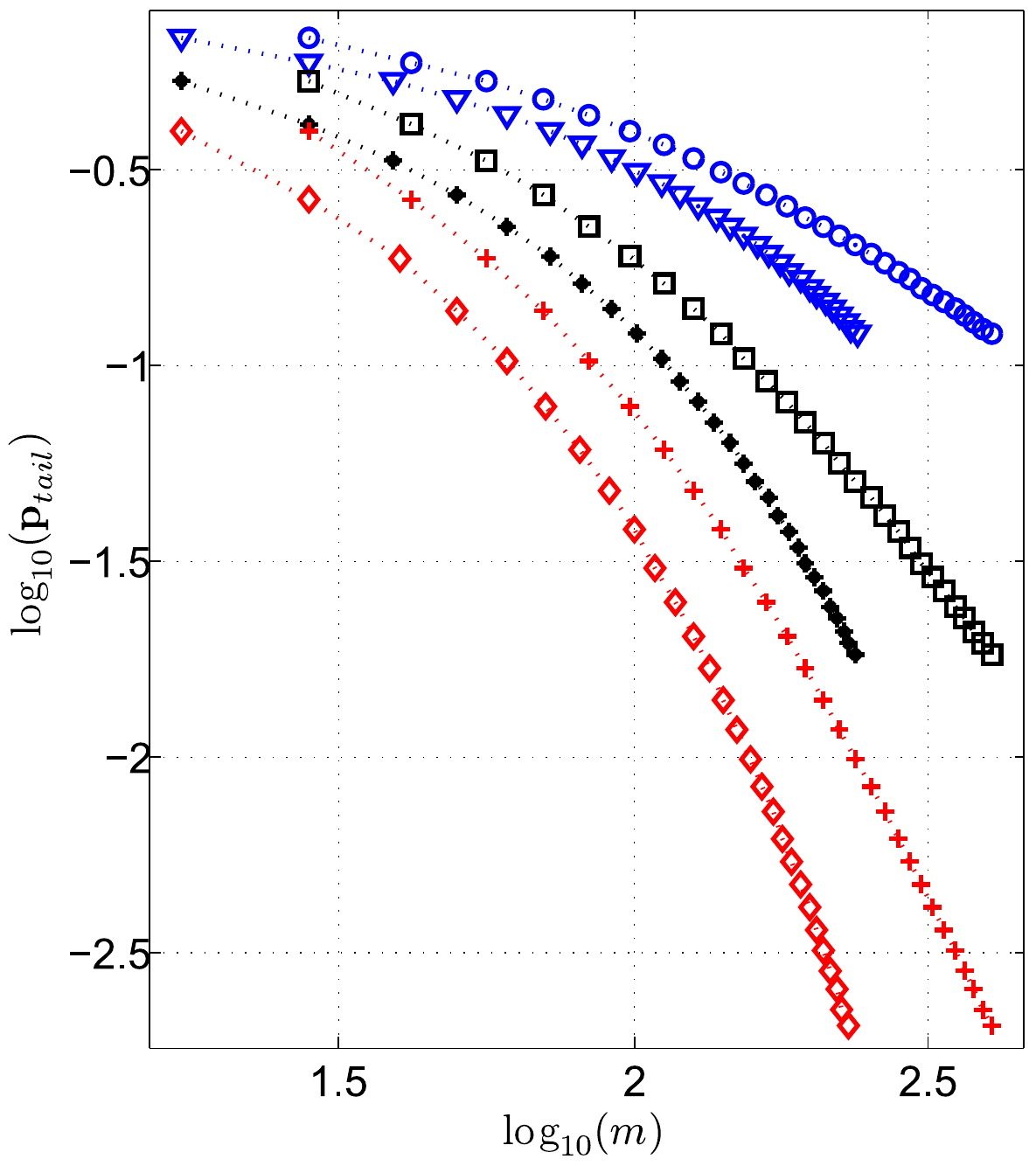}}
\label{fig:subfig1400}
} 
\subfigure[$1800$ edges]{
\resizebox{!}{.33\textheight}{
\includegraphics{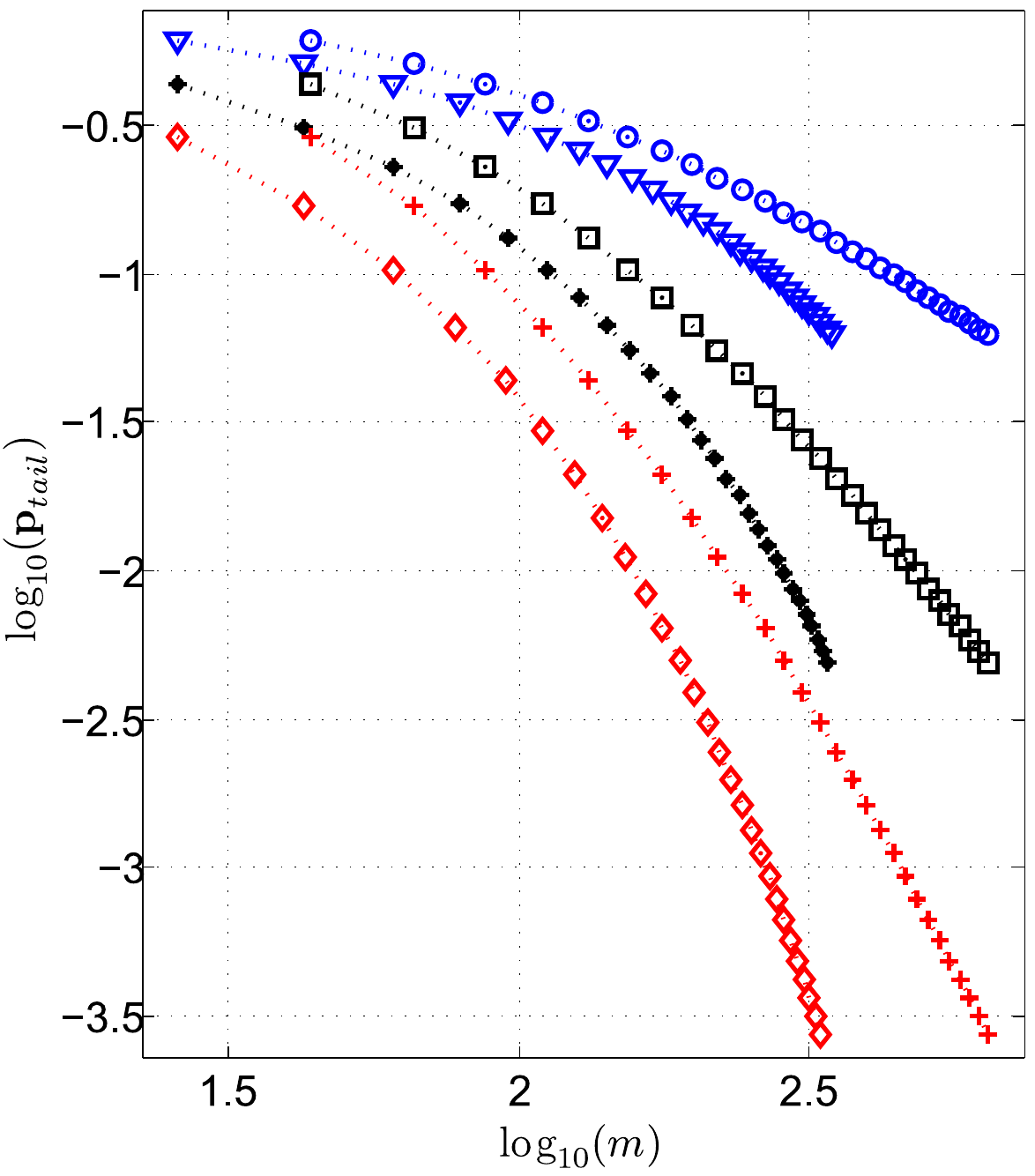}}
\label{fig:subfig1800}
} \\
\caption{Logarithmic tail probability versus logarithmic ratio of minimum required number of measurements in
our QNC scenario and i.i.d. Gaussian measurement matrices, for $n=100$, different RIP constants, and different number of edges
\label{fig:subfigureExample}.
}
\end{figure*}


By using the derived tail probability of (\ref{Eq:tailProb}), and applying Theorem~\ref{theorem:RIP1}, the following theorem can be obtained, which suggests an upper probability bound on the satisfaction of RIP, in our QNC scenario.
\begin{theorem}\label{th:RIPqnc}
For a quantized network coding scenario, in which the network coding coefficients hold the conditions of Theorem~\ref{th:Gaussian1}, for every orthonormal $\phi$, the resulting $\Theta=\Psi_{tot}(t) \cdot \phi$ satisfies RIP of order $k$, and constant $\delta_k$, with a probability exceeding $$\textbf{p}_{RIP}\Big(\Psi_{tot}(t),k,\delta_k \Big),$$ defined in Eq.~\ref{Eq:definePrip}.
\begin{figure*}[]
\begin{equation}\label{Eq:definePrip}
\textbf{p}_{RIP}\Big(\Psi_{tot}(t),k,\delta_k \Big)=
1-\left(
\begin{array}{c}
n\\
k
\end{array}
\right) (\frac{42}{\delta_k})^k ~ \Big ( 1-\frac{1}{\pi} \min_{~\underline{x},~\vectornorm{\underline{x}}=1~}  \int_{-\infty}^{+\infty} \frac{ e^{-j\omega} \sin(\frac{\delta_k}{\sqrt{2}} \omega)}{\omega \prod_{e=1}^{|\mathcal{E}|} \sqrt{1-j2\omega \lambda_e(\underline{x})} } d\omega \Big)
\end{equation}
\end{figure*}
\end{theorem}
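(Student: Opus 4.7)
The plan is to treat this theorem as essentially a corollary that stitches together Theorem~\ref{theorem:RIP1} with the explicit tail-probability formula derived in Eq.~\ref{Eq:tailProb}. Since the conditions of Theorem~\ref{th:Gaussian1} are assumed, the factorization $\Psi_{tot}(t)=\Omega(t)\cdot A(2)$ holds, and so for every fixed unit-norm $\underline{x}$ the calculation leading to (\ref{Eq:tailProb}) is valid verbatim with $\Phi=\Psi_{tot}(t)$. My first step would therefore be simply to record that, for any $\underline{x}$ with $\vectornorm{\underline{x}}=1$,
\begin{equation}
\textbf{P}\Big(\Big|\vectornorm{\Psi_{tot}(t)\cdot\underline{x}}^2-1\Big|\geq\epsilon\Big)
= 1-\frac{1}{\pi}\int_{-\infty}^{+\infty}\frac{e^{-j\omega}\sin(\epsilon\omega)}{\omega\prod_{e=1}^{|\mathcal{E}|}\sqrt{1-j2\omega\lambda_e(\underline{x})}}d\omega,\nonumber
\end{equation}
so that $\Psi_{tot}(t)$ satisfies the hypothesis (\ref{Eq:RIP1condition}) of Theorem~\ref{theorem:RIP1} with this explicit expression plugged into the definition of $\textbf{p}_{tail}$.

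Next, I would convert the max over $\underline{x}$ in the definition of $\textbf{p}_{tail}(\Psi_{tot}(t),\epsilon)$ into a min inside the integral: because of the leading $1-$ and the overall minus sign multiplying the integral, maximizing the probability over unit-norm $\underline{x}$ is equivalent to minimizing the (real-valued) integral over the same constraint set. This yields
\begin{equation}
\textbf{p}_{tail}(\Psi_{tot}(t),\epsilon)=1-\frac{1}{\pi}\min_{\underline{x},\,\vectornorm{\underline{x}}=1}\int_{-\infty}^{+\infty}\frac{e^{-j\omega}\sin(\epsilon\omega)}{\omega\prod_{e=1}^{|\mathcal{E}|}\sqrt{1-j2\omega\lambda_e(\underline{x})}}d\omega.\nonumber
\end{equation}
I would then invoke Theorem~\ref{theorem:RIP1}: for every orthonormal $\phi$, the matrix $\Theta=\Psi_{tot}(t)\cdot\phi$ satisfies RIP of order $k$ and constant $\delta_k$ with probability at least $\textbf{p}_{RIP}(\Psi_{tot}(t),k,\delta_k)$ as given in (\ref{Eq:lowerRIPbound1}), evaluated at $\epsilon=\delta_k/\sqrt{2}$. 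Substituting the explicit worst-case tail probability from the previous step into that bound reproduces Eq.~\ref{Eq:definePrip} exactly, which closes the argument.

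The proof is essentially bookkeeping: there is no genuinely new estimate to perform beyond those already contained in Section~\ref{sec:TailProbRIP}. If anything would require a careful touch, it is the justification that the swap ``max of probability $\leftrightarrow$ min of integral'' is legitimate — this hinges on the integral being a real number (which it is, since it equals $1-$ a probability) and on the sign of the multiplicative factor in front. A secondary point worth noting, though not really an obstacle, is that the integrand must be interpreted in the principal-value sense to handle the $1/\omega$ singularity at the origin, exactly as in the derivation of (\ref{Eq:tailProb}); this convergence question was implicitly settled earlier and I would simply cite that derivation rather than redo it.
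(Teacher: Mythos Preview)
Your proposal is correct and matches the paper's own approach exactly: the paper treats Theorem~\ref{th:RIPqnc} as an immediate corollary, stating only that it follows ``by using the derived tail probability of (\ref{Eq:tailProb}), and applying Theorem~\ref{theorem:RIP1}.'' Your write-up is in fact more detailed than the paper's, since you spell out the max-to-min swap and the substitution $\epsilon=\delta_k/\sqrt{2}$ explicitly.
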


It is however difficult to derive the number of required measurements, $m$, from the expression of Eq.~\ref{Eq:definePrip}; we use numerical evaluations, in section~\ref{sec:Numerical}, to explore the properties of our QNC design.

\section{Numerical Evaluations and Discussion}
\label{sec:Numerical}
In order to evaluate the RIP satisfaction of $\Psi_{tot}(t)$, resulting from the proposed network coding coefficients, we use the worst case tail probability, $\textbf{p}_{tail}(\centerdot,\frac{\delta_k}{\sqrt{2}})$.
This is because of the deterministic (linear) relation between $\textbf{p}_{tail}(\centerdot,\frac{\delta_k}{\sqrt{2}})$ and the proposed upper probability bound in Theorem~\ref{theorem:RIP1}.
Moreover, we calculate the worst case tail probability, corresponding to an i.i.d. Gaussian matrix, called $G_{m \times n}$, and compare it with that of our $\Psi_{tot}(t)$.
For an $m \times n$ i.i.d. Gaussian matrix, $G_{m \times n}$, the worst case tail probability, $\textbf{p}_{tail}(G_{m \times n},\frac{\delta_k}{\sqrt{2}})$, can be calculated as:
\footnote{This can be obtained similar to the reasoning procedure for Eq.~\ref{Eq:tailProb}.}
\begin{equation}\label{Eq:GaussianTail}
\textbf{p}_{tail}(G_{m \times n},\frac{\delta_k}{\sqrt{2}})=
1-\frac{1}{\pi} \int_{-\infty}^{+\infty} \frac{ e^{-j\omega} \sin( \omega\frac{ \delta_k }{\sqrt{2}})}{\omega~   (1-2j\frac{\omega }{m}  )^{m/2} } d\omega. 
\end{equation}


To present our numerical evaluations, for each value of tail probability, represented by $\textbf{p}_{tail}$, the minimum number of required measurements in $\Psi_{tot}(t)$, resulting from our QNC scenario (with the designed network coding coefficients, as in Theorem~\ref{th:Gaussian1}) and $G_{m \times n}$, are calculated.
This is done by generating random deployments of networks and calculating the worst case tail probability of (\ref{Eq:tailProb}) and (\ref{Eq:GaussianTail}) in each generated deployment.
The resulting tail probabilities, and corresponding number of measurements are then averaged over different realizations of network deployments.
In Fig.~\ref{fig:subfigureExample}, $\textbf{p}_{tail}$ is drawn versus $m$ in logarithmic scale, for $\Psi_{tot}(t)$ (QNC) and $G_{m \times n}$ (Gaussian), and different values of RIP constant, $\delta_k$ ($\delta_k=0.41421 \simeq \sqrt(2)-1$ is the largest RIP constant for which Theorem~4.1 of \cite{naba} can be applied).

The statistical characteristics of the resulting $\Psi_{tot}(t)$ and its worst case tail probability vary by changing the network deployment parameters, like the distribution of edges in the network.
In Figs.~\ref{fig:subfig1100} to \ref{fig:subfig1800}, the curves correspond to different deployments with $n=100$ nodes, and $|\mathcal{E}|=1100,1400,1800$ uniformly distributed edges, respectively.
To generate the network coding coefficients, $\alpha_{e,v}(t)$'s and $\beta_{e,e'}(t)$'s, we make sure that the conditions of Theorem~\ref{th:Gaussian1} are satisfied.
Moreover, for $\beta_{e,e'}(t)$'s, it was experimentally understood that the resulting $\Psi_{tot}(t)$ has a better behavior in terms of RIP satisfaction (and also $\ell_1$-min recovery) if in any two outgoing edges, $\beta_{e,e'}$'s are orthogonal.
%

By studying the curves in Fig.~\ref{fig:subfigureExample}, the following arguments can be made:
\begin{itemize}
\item 
The minimum number of required measurements for $\Psi_{tot}(t)$ to achieve a worst case tail probability as a perfect i.i.d. Gaussian measurement matrix is in the same order as that of i.i.d. Gaussian (the logarithmic difference between the number of measurements for QNC and Gaussian cases is less than $1$).
Therefore, the number of required measurements in our QNC, for an {overwhelming} probability of RIP satisfaction (Eq.~\ref{Eq:GoverWprob}) is in the same order as that of an i.i.d. Gaussian matrix.
Furthermore, this behavior is improved when the number of edges in the network increases, or the corresponding RIP constant is increased.

\item 
By applying Theorem~\ref{theorem:RIP1}, on the resulting $\textbf{p}_{tail}$, the lower bound on the RIP satisfaction for each sparsity, $k$, can be obtained.
Therefore, \emph{{as an implication of RIP}} (Theorem~4.1 in \cite{naba}), we can make the following probabilistic statement about $\ell_1$-min recovery error, in QNC scenario:

\textit{Consider the QNC scenario, described in Theorem~4.1 of \cite{naba}, in which we transmit $k$-sparse messages.
In such a scenario, if the resulting $\Psi_{tot}(t)$ corresponds to a point with $\textbf{p}_{tail}(\Psi_{tot}(t),\frac{\delta_{2k}}{\sqrt{2}})$ on one of the evaluated curves of Fig.~\ref{fig:subfigureExample}, then the $\ell_2$-norm of recovery error, using the $\ell_1$-min decoder of Eq.~14 in \cite{naba}, is upper bounded according to (15) in \cite{naba}, with a probability exceeding $\textbf{p}_{RIP}\Big(\Psi_{tot}(t),2k,\delta_{2k} \Big)$.}

\item
By calculating the lower bound for RIP satisfaction (using Eq.~\ref{Eq:lowerRIPbound1}), corresponding to one of the points on the curves of Fig.~\ref{fig:subfigureExample}, it would be clear that the possible sparsity, $k$, for which the resulting $\textbf{p}_{RIP}(\Psi_{tot}(t),2k,\delta_{2k})$ approaches $1$, is very small.
In other words, QNC requires a lot of measurements to guarantee the upper bound of (15) in \cite{naba}, with an overwhelming probability.
However, this is also the case for i.i.d. Gaussian matrices, as it has been previously pointed out by the authors of \cite{phaseTransitions,sharpRIP,bah2010improved}, that the RIP analysis for i.i.d Gaussian matrices proposes an exaggerated minimum number of measurements, required for robust $\ell_1$-min recovery.
In conclusion, the minimum number of measurements, required for guaranteeing robust $\ell_1$-min decoding, using our proposed $\Psi_{tot}(t)$, is in the same order as that of i.i.d. Gaussian matrix. 
The aforementioned fact (on exaggerated required number of measurements) can be considered as a weakness of RIP analysis, used in the compressed sensing literature.

\end{itemize}

\section{Conclusions}
\label{sec:Conclusions}
Joint distributed source coding and network coding of sparse messages with compressed sensing perspective was discussed in this paper.
We investigated the satisfaction of RIP, in a modified random linear network coding scenario, called quantized network coding.
This was explicitly done by using mathematical derivation for the tail probability of the resulting measurement matrix in our QNC scenario, and that of i.i.d. Gaussian matrix.
It was numerically shown that our linear measurements have the same RIP behavior (in terms of order of minimum number of required measurements) as i.i.d. Gaussian measurements.
Our RIP analysis provided us with the preliminaries for guaranteeing robust $\ell_1$-min decoding, in QNC scenario.

\section*{Acknowledgement}
This work was supported by Hydro-Québec, the Natural Sciences and Engineering Research Council of Canada and McGill University in the framework of the NSERC/Hydro-Québec/McGill Industrial Research Chair in Interactive Information Infrastructure for the Power Grid.

\bibliographystyle{ieeetr}
\bibliography{Ref}
\end{document}